\newcommand{\Reps}{R^\varepsilon}
\newcommand{\In}{\text{In}}
\newcommand{\Neps}{\mathcal{N}_\varepsilon}
\begin{document}

\title{Bend-Bounded Path Intersection Graphs:\\ Sausages, Noodles, and Waffles on a Grill\thanks{Research support by Czech research grants CE-ITI GA\v{C}R P202/12/6061 and GraDR EUROGIGA GA\v{C}R GIG/11/E023.}}

\author{%
Steven Chaplick\inst{1}\thanks{This author was also supported by the
Natural Sciences and Engineering Research Council of Canada and the University
of Toronto.} \and 
V\'{\i}t Jel\'{\i}nek\inst{2}\and
Jan Kratochv\'{\i}l\inst{3}\and  
Tom\'a\v{s} Vysko\v{c}il\inst{4}
}

\institute{Department of Mathematics, Wilfrid Laurier University, Waterloo, CA,
\email{chaplick@cs.toronto.edu} 
\and
Computer Science Institute, Faculty of Mathematics and Physics,
Charles University, Prague, e-mail: \email{jelinek@iuuk.mff.cuni.cz}
\and
Department of Applied Mathematics, Faculty of
Mathematics and Physics, Charles University, Prague,
\email{honza@kam.mff.cuni.cz}
\and
Department of Applied Mathematics, Faculty of Mathematics and Physics,
Charles University, Prague, e-mail: {\tt whisky@kam.mff.cuni.cz}
}

\date{}
\maketitle

\begin{abstract}
In this paper we study properties of intersection graphs of $k$-bend paths
in the rectangular grid. A $k$-bend path is a path with at most $k$ 90 degree turns.
The class of graphs representable by intersections of $k$-bend paths is denoted by $B_k$-VPG.
We show here that for every fixed $k$, $B_{k}$-VPG $\subsetneq$ $B_{k+1}$-VPG and that recognition of graphs from $B_k$-VPG is NP-complete even
when the input graph is given by a $B_{k+1}$-VPG representation. We also show
that the class $B_k$-VPG (for $k \ge 1$) is in no inclusion relation with the class
of intersection graphs of straight line segments in the plane.
\end{abstract}

\section{Introduction}

In this paper we continue the study of Vertex-intersection graphs of Paths in
Grids\footnote{The grids to which we refer are always rectangular.} (VPG graphs)
started by Asinowski et. al \cite{Asinowski2011,Asinowski2012}. A \emph{VPG
representation} of a graph $G$ is a collection of paths of the rectangular grid
where the paths represent the vertices of $G$ in such a way that two vertices of
$G$ are adjacent if and only if the corresponding paths share at least one
vertex.

VPG representations arise naturally when studying circuit layout problems and
layout optimization \cite{sinden} where layouts are modelled as paths (wires) on
grids. One approach to minimize the cost or difficulty of production involves
minimizing the number of times the wires bend \cite{circuits1,circuits2}. Thus
the research has been focused on VPG representations parameterized by the number
of times each path is allowed to \emph{bend} (these representations are also
the focus of \cite{Asinowski2011,Asinowski2012}). In particular, a
\emph{$k$-bend path} is a path in the grid which contains at most $k$
\emph{bends} where a \emph{bend} is when two consecutive edges on the path have
different horizontal/vertical orientation. In this sense a $B_k$-VPG
representation of a graph $G$ is a VPG representation of $G$ where each path is
a $k$-bend path. A graph is $B_k$-VPG if it has a $B_k$-VPG representation.

Several relationships between VPG graphs and traditional graph classes (i.e.,
circle graphs, circular arc graphs, interval graphs, planar graphs, segment
(SEG) graphs, and string (STRING) graphs) were observed in
\cite{Asinowski2011,Asinowski2012}. For example, the equivalence between string
graphs (the intersection graphs of curves in the plane) and VPG graphs is
formally proven in \cite{Asinowski2012}, but it was known as folklore result
\cite{faithful}.
Additionally, the base case of this family of graph classes (namely, $B_0$-VPG)
is a special case of segment graphs (the intersection graphs of line segments in
the plane). Specifically, $B_0$-VPG is more well known as the 2-DIR\footnote{Note: 
a $k$-DIR graph is an intersection graph of straight line segments in the plane 
with at most $k$ distinct directions (slopes).}.
The recognition problem for the VPG = string graph class is known to be NP-Hard
by \cite{honza} and in NP by \cite{schaeffer}. Similarly, it is NP-Complete to
recognize 2-DIR = $B_0$-VPG graphs \cite{plsat}. However, the recognition status
of $B_k$-VPG for every $k > 0$ was given as an open problem from
\cite{Asinowski2012} (all cases were conjectured to be NP-Complete). We confirm
this conjecture by proving a stronger result. Namely, we demonstrate that
deciding whether a $B_{k+1}$-VPG graph is a $B_k$-VPG graph is NP-Complete (for
any fixed $k > 0$) -- see Section \ref{sec:reduction}.

Furthermore, in \cite{Asinowski2011,Asinowski2012} it is shown that $B_0$-VPG
$\subsetneq$ $B_1$-VPG $\subsetneq$ VPG and it was conjectured that $B_k$-VPG
$\subsetneq$ $B_{k+1}$-VPG for every $k>0$. We confirm this conjecture
constructively -- see Section~\ref{sec:sausage}. 

Finally, we consider the relationship between the $B_k$-VPG graph classes and
segment graphs. In particular, we show that SEG and $B_k$-VPG are incomparable
through the following pair of results (the latter of which is somewhat
surprising): (1) There is a $B_1$-VPG graph which is not a SEG graph; (2) For
every $k$, there is a 3-DIR
%\footnote{A graph is 3-DIR if it is an intersection graph of line segments in the plane which have at most three distinct directions (slopes).}
 graph which has no $B_k$-VPG representation.

The paper is organized as follows. In Section~\ref{sec:nfl} we introduce the
Noodle-Forcing Lemma, which is the key to restricting the topological structure
of VPG representations\footnote{This was inspired by the order forcing lemma of
\cite{segments}.}. In Section~\ref{sec:sausage} we introduce the ``sausage''
structure which is the crucial gadget that we use for the hardness reduction and
which by itself shows that $B_k$-VPG is strict subset of
$B_{k+1}$-VPG\footnote{This gadget is named due to its VPG representation
resembling sausage links.}. We also demonstrate the incomparability of
$B_k$-VPG and SEG in Section \ref{sec:sausage}. The NP-hardness reduction is
presented in
Section~\ref{sec:reduction}. We end the paper with some remarks and open
problems.

%%%%%%%%%%%%%%%%%%%%%%%%%%%%%%%%%%%%%%%%%%%%%%%%%%%%%%%%%%
%
%   Noodle-Forcing Lemma
%
%%%%%%%%%%%%%%%%%%%%%%%%%%%%%%%%%%%%%%%%%%%%%%%%%%%%%%%%%%

\section{Noodle-Forcing Lemma}
\label{sec:nfl}

In this section, we present the key lemma of this paper (see Lemma
\ref{lem-noodle}). Essentially, we prove that, for ``proper'' representations
$R$ of a graph $G$, there is a graph $G'$ where $G$ is an induced subgraph of
$G'$ and $R$ is ``sub-representation'' of every representation of $G'$ (i.e.,
all representations of $G'$ require the part corresponding to $G$ to have the
``topological structure'' of $R$). We begin this section with several
definitions. 

Let $G=(V,E)$ be a graph. A \emph{representation} of $G$ is a
collection $R=\{R(v),\,v\in V\}$ of piecewise linear curves in the plane, such
that $R(u)\cap R(v)$ is nonempty iff $uv$ is an edge of~$G$.

An \emph{intersection point} of a representation $R$ is a point in the plane
that belongs to (at least) two distinct curves of $R$. Let $\In(R)$ denote the
set of intersection points of~$R$.

A representation is \emph{proper} if
\begin{enumerate}
 \item each $R(v)$ is a simple curve, i.e., it does not intersect itself,
 \item $R$ has only finitely many intersection points (in particular no two
curves may overlap) and finitely many bends, and
 \item each intersection point $p$ belongs to exactly two curves of $R$, and the
two curves cross in $p$ (in particular, the curves may not touch, and an
endpoint of a curve may not belong to another curve).
\end{enumerate}

Let $R$ be a proper representation of $G=(V,E)$, let $R'$ be another (not
necessarily proper) representation of $G$, and let $\phi$ be a mapping from
$\In(R)$ to $\In(R')$. We say that $\phi$ is \emph{order-preserving} if it is
injective and has the property that for every $v\in V$, if $p_1,p_2,\dotsc,p_k$
are all the distinct intersection points on $R(v)$, then
$\phi(p_1),\dotsc,\phi(p_k)$ all belong to $R'(v)$ and they appear on $R'(v)$ in
the same relative order as the points $p_1,\dotsc,p_k$ on $R(v)$. (If $R'(v)$
visits the point $\phi(p_i)$ more than once, we may select one visit of each
$\phi(p_i)$, such that the selected visits occur in the correct order
$\phi(p_1),\dotsc,\phi(p_k)$.)

For a set $P$ of points in the plane, the \emph{$\varepsilon$-neighborhood of
$P$}, denoted by $\Neps(P)$, is the set of points that have distance less than
$\varepsilon$ from $P$.

\begin{lemma}[Noodle-Forcing Lemma]\label{lem-noodle}
Let $G=(V,E)$ be a graph with a proper representation $R=\{R(v),\,v\in V\}$.
Then there exists a graph $G'=(V',E')$ containing $G$ as an induced subgraph,
which has a proper representation $R'=\{R'(v),\, v\in V'\}$ such
that $R(v)=R'(v)$ for every $v\in V$, and $R'(w)$ is a horizontal or vertical
segment for $w\in V'\setminus V$. Moreover, for any $\varepsilon>0$, any (not
necessarily proper) representation of $G'$ can be transformed by a homeomorphism
of the plane and by circular inversion into a representation
$\Reps=\{\Reps(v),\, v\in V'\}$ with these properties:
\begin{enumerate}
\item for every vertex $v\in V$, the curve $\Reps(v)$ is contained in the
$\varepsilon$-neighbor\-hood of $R(v)$, and $R(v)$ is contained in the
$\varepsilon$-neighborhood of $\Reps(v)$.
\item there is an order-preserving mapping $\phi\colon \In(R)\to\In(\Reps)$,
with the additional property that for every $p\in\In(R)$, the point $\phi(p)$
coincides with the point~$p$.
\end{enumerate}
\end{lemma}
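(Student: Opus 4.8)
The plan is to realize $G'$ by surrounding the given representation $R$ with a dense, rigid \emph{mesh} of axis-parallel noodles whose intersection pattern is topologically forced, and then to use that rigidity to trap the curves $R(v)$.

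\textbf{Construction.} First I would place $R$ inside a large axis-parallel rectangle and, after a tiny perturbation that keeps $R$ proper, assume all bends and all points of $\In(R)$ are in general position (distinct coordinates, none on the frame). I then overlay a grid of horizontal and vertical segments: the outermost four segments form a rectangular \emph{frame} $F$, and the remaining ones cut the interior into cells of diameter smaller than $\varepsilon$, chosen fine enough that (i) each point of $\In(R)$ sits in the interior of its own cell, away from all mesh segments, and (ii) every curve $R(v)$ crosses the mesh transversally, so that its sequence of crossings with mesh segments records $R(v)$ up to resolution $\varepsilon$. The new vertices $V'\setminus V$ are these mesh segments, and $G'$ is the intersection graph of $R'=R\cup\{\text{mesh}\}$. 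By construction $R'$ is proper, $R'(v)=R(v)$ for $v\in V$, and each $R'(w)$ is horizontal or vertical.

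\textbf{Rigidity of the mesh.} Given an arbitrary representation of $G'$ and any $\varepsilon>0$, I would first normalize it. The four frame curves cross cyclically ($A$--$B$--$C$--$D$--$A$ with $A\cap C=B\cap D=\emptyset$), so their union contains a closed curve; by the Jordan curve theorem it bounds a disk, and after a circular inversion (making the side containing the rest of the representation the bounded one) and a homeomorphism of the plane I may assume $F$ is the standard rectangle and all other noodles and all $\Reps(v)$ lie inside it. Each horizontal noodle meets both vertical frame sides and is disjoint from every other horizontal noodle, so by iterated application of the Jordan curve theorem the horizontal noodles are linearly ordered inside the disk exactly by their crossing order along the frame; symmetrically for the verticals. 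Since every horizontal meets every vertical exactly once, these forced orders make the whole mesh a topological grid combinatorially identical to the mesh of $R'$, and a further homeomorphism fixing the mesh of $R'$ lets me assume the mesh of $\Reps$ coincides with that of $R'$.

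\textbf{Trapping the curves and the order map.} Once the mesh is pinned to that of $R'$, each $\Reps(v)$ is forced by its adjacencies in $G'$ to cross exactly the mesh segments that $R(v)$ crosses, and the enforced order along each noodle forces it to do so in the same order; hence $\Reps(v)$ passes through the same sequence of cells as $R(v)$. As every cell has diameter below $\varepsilon$, both curves lie in the thin tube formed by these cells, giving property~1: $\Reps(v)\subseteq\Neps(R(v))$ and $R(v)\subseteq\Neps(\Reps(v))$. For property~2, each $p\in\In(R)$ lies in a unique cell, and the two curves meeting at $p$ are the only ones that can cross inside that cell; I set $\phi(p)$ to be that crossing. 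Order-preservation along each $R(v)$ then follows from the preserved crossing order with the mesh, and a last homeomorphism fixing the mesh slides each such crossing onto the point $p$ itself, so $\phi(p)=p$. The crux of the whole argument is the rigidity step: deducing from the \emph{abstract} graph $G'$ alone that the mesh must be a topological grid \emph{with the correct linear order} of crossings along every noodle, rather than merely the correct abstract bipartite crossing pattern. Forcing the orders (not just the adjacencies) is where the frame and the repeated Jordan-curve separations do the real work, and correctly handling the outer face via circular inversion, so that everything is confined to the frame's disk, is the delicate point in making the argument airtight.
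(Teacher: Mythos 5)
Your construction differs from the paper's in a way that creates a genuine gap, and you have in fact put your finger on it yourself: the rigidity step does not go through for a mesh of \emph{full-length} grid lines. The abstract graph $G'$ records only the \emph{set} of mesh segments each curve $R(v)$ crosses, not \emph{where along} each mesh line the crossing occurs, and nothing in your argument forces that positional information. Concretely, take the sausage of Theorem~\ref{thm:grill-K2}: two monotone staircases $R(u),R(v)$ crossing $k+1$ times. Each crosses exactly the horizontal mesh lines in its $y$-range and the vertical mesh lines in its $x$-range. A rogue representation of your $G'$ draws the mesh as the standard grid but replaces $R(u)$ by a monotone curve hugging the lower-right of the common bounding box and $R(v)$ by one hugging the upper-left, crossing each other exactly once; both still meet precisely the required set of mesh lines, every adjacency of $G'$ is respected, yet no order-preserving injection of the $k+1$ intersection points exists. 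So property~2 fails for your $G'$. Your appeal to ``the enforced order along each noodle'' is circular: the order in which the curves of $R$ appear along a given mesh line is exactly what is not enforced. (A secondary problem: even the claimed linear ordering of the horizontal noodles ``by their crossing order along the frame'' is not determined by $G'$, since $G'$ does not know that order either; permuting the interior horizontals yields an automorphism of the mesh part of the graph.)

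The paper closes this gap with two ingredients you are missing. First, the scaffold is not a complete bipartite mesh but a subdivision $H$ of a $3$-connected plane graph, realized by \emph{short} segments: two crossing vertex-segments per vertex of $H$, one short edge-segment per edge of $H$ overlapping only that edge, and connectors tying them together. The intersection graph of this scaffold is essentially $H$ itself, so Whitney's theorem (unique embedding of $3$-connected planar graphs up to homeomorphism and choice of outer face, i.e., circular inversion) forces the entire embedding of the scaffold, including the cyclic order of edge-segments around each face. Second, the grid $H$ is refined until each edge of $H$ is crossed at most once by the curves of $R$ and each face contains at most one special point (properties P3--P6); since a curve of $R$ meets an edge-segment iff it crosses that particular short edge, and that edge-segment's position is now pinned, each $R''(v)$ is forced through the correct sequence of pseudo-faces, which is what yields both the $\varepsilon$-closeness and the order-preserving map. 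Your intuition about confining everything inside a frame by Jordan-curve/circular-inversion arguments is in the right spirit, but the rigidity must come from $3$-connectivity plus planarity of the scaffold's own intersection graph together with locality of the crossings, not from a dense nonplanar mesh.
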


\begin{proof}
Suppose we are given a proper representation $R$ of a graph~$G$. We say that a
point in the plane is \emph{a special point} of $R$, if it is an endpoint of a
curve in $R$, a bend of a curve in $R$, or an intersection point of~$R$.

Before we describe the graph $G'$, we first construct an auxiliary graph $H$
which is a subdivision of a 3-connected plane graph whose drawing overlays the
representation $R$ and has the following properties.

\begin{enumerate}
 \item[P1] The edges of $H$ are drawn as vertical and horizontal segments, and
every internal face of $H$ is a rectangle (possibly containing more than four
vertices of $H$ on its boundary). The outer face of $H$ is not intersected by
any curve of~$R$.
 \item[P2] No curve of $R$ passes through a vertex of $H$, and no edge of $H$
passes through a special point of $R$.
\item[P3] Every face of $H$ contains at most one special point of $R$, and no
two faces containing a special point are adjacent.
\item[P4] Every edge of $H$ is intersected at most once by the curves of~$R$.
\item[P5] Every face of $H$ is intersected by at most two curves of $R$, and if
a face $f$ is intersected by two curves of $R$, then the two curves intersect
inside~$f$.
\item[P6] Every curve of $R$ intersects the boundary of a face of $H$ at most
twice.
\end{enumerate}

We construct the plane graph $H$ in several steps. In the first step, we produce
a square grid $H_0$ such that the whole representation $R$ is contained in the
interior of $H_0$, i.e., no part of $R$ intersects the outer face of $H_0$.
We may further assume that the grid $H_0$ is fine enough so that it has
the properties P1--P3 above. See Fig.~\ref{fig:fiter}.

\begin{figure}[ht]
\centering
\includegraphics[scale=0.7]{}
\caption{An embedding of a graph $H_0$ over a representation $R$.}
\label{fig:fiter}
\end{figure}

If the graph $H_0$ does not satisfy all the properties P1--P6, we will further
subdivide some of the faces of $H_0$. Suppose first that $H_0$ has an edge $e$
that is intersected more than once by the curves of $R$ (see
Fig.~\ref{fig:siter}). We then add a new edge $e'$ into the drawing of $H_0$,
which is parallel to $e$ and embedded very close to~$e$, thus splitting
a face adjacent to $e$ into two new faces. We then split the face incident to
$e$ and $e'$ by new edges perpendicular to $e$ and $e'$, in such a way that
each intersection point of $e$ or $e'$ with a curve of $R$ belongs to a
different edge in the new graph. Performing this operation for every edge of
$H_0$, we obtain a new graph $H_1$, which satisfies the properties P1--P4.

\begin{figure}[ht]
\centering
\includegraphics[scale=0.7]{}
\caption{An embedding of a graph $H_1$ over a representation $R$.}
\label{fig:siter}
\end{figure}

If the graph $H_1$ fails to satisfy P5 or P6, it means that $H_1$ has a face
$f$ whose intersection with $R$ contains two disjoint curves $p$
and $q$, each of which is a subcurve of a curve from $R$ (see
Fig.~\ref{fig:titer}). In this case, we draw in $f$ a piecewise linear curve $c$
with horizontal and vertical segments which cuts $f$ into two pieces such that
one contains $p$ and the other $q$. We then extend the segments of $c$
to form a grid-like subdivision of the face $f$ into subfaces, each of which is
only intersected by at most one of $p$ and~$q$. If necessary, we may further
subdivide the newly created faces to make sure they do not violate P4. In this
way, we obtain the graph $H$ satisfying all the properties P1--P6.

\begin{figure}[ht]
\centering
\includegraphics[scale=0.7]{}
\caption{A description of the transformation of $H_1$ to $H_2$.}
\label{fig:titer}
\end{figure}

We now transform $H$ into an arrangement $S$ of horizontal and vertical
segments. See Figure~\ref{fig:trans}. For every vertex $v\in V(H)$, the set $S$
contains two segments $S_1(v)$ and $S_2(v)$ which intersect close to the
point~$v$. We call these two segments the \emph{vertex-segments of $v$}. We
assume that the vertex segments do not intersect any of the curves of $R$, and
they do not overlap with any edge of~$H$. For every edge $e\in E(H)$, we further
put into $S$ a segment $S(e)$, called the \emph{edge-segment of $e$}, which
partially overlaps with $e$, intersects exactly those curves of $R$ that $e$
intersects, and does not intersect any vertex-segment or any other edge-segment
of $S$. Finally, for any vertex $v$ that is incident to an edge $e$ of $H$, we
put into $S$ a segment $S(v,e)$ that intersects both $S(e)$ and the vertex
segment of $v$ that is parallel to $S(e)$, and does not intersect any other
segment of $S$ or curve of~$R$. We call $S(v,e)$ \emph{the connector of $v$ and
$e$}.

\begin{figure}[ht]
\centering
\includegraphics[scale=0.7]{}
\caption{Transforming the graph $H$ into a segment arrangement~$S$.}
\label{fig:trans}
\end{figure}

Define a representation $R'=R\cup S$, and let $G'$ be the graph determined
by this representation. Clearly, $G$ is an induced subgraph of $G'$, and the
representation $R'$ has the properties stated in the lemma. It remains to show
that every representation of $G'$ can be transformed into a representation
$\Reps$ that has the two properties stated in the lemma. We will first show
that every representation of $G'$ can be transformed into a representation that
satisfies the first property, and then we will show that the second property
follows from the first one.

Consider an arbitrary representation $R''$ of $G'$. The curves in $R''$ that
correspond to the vertex-segments, edge-segments and connectors will be referred
to as vertex-curves, edge-curves and connector curves. Let $S''(e)$ be an
edge-curve that represents an edge $e\in E(H)$ in $R''$. Let $c_1$ and $c_2$ be
its two adjacent connector curves. Let $\gamma(e)$ be a minimal subcurve of
$S''(e)$ with the property that its two endpoints belong respectively to $c_1$
and $c_2$. In particular, $\gamma(e)$ has no other intersections with the
connector curves apart from its endpoints. Call $\gamma(e)$ \emph{the main part}
of $S''(e)$. Next, for each connector curve, consider its minimal subcurve
whose one endpoint belongs to the main part of its adjacent edge-curve, and
the other endpoint belongs to its adjacent vertex-curve. Call this subcurve
\emph{the main part} of the connector curve.

If, in the representation $R''$, we contract each intersecting pair of
vertex-curves into a single point, and then replace each connector curve and
edge-curve by their main parts, we obtain a drawing of the graph $H$, where a
vertex is represented by the contracted pair of vertex-curves, and an edge
corresponds to the main part of its edge-curve together with the adjacent
connectors. Since $H$ is a subdivision of a 3-connected graph, each of its
drawings can be transformed
into any other drawing by a circular inversion and a homeomorphism. 
In particular,
we may transform $R''$ in such a way that the main part of any edge-curve is
contained inside the corresponding edge-segment $S(e)$, the main part of
any connector curve is contained inside the corresponding connector, and every 
vertex-curve representing a vertex $v\in V(H)$ is embedded in a small
neighborhood of $S_1(v) \cup S_2(v)$. 
Suppose therefore that $R''$ has been transformed in this way.

Let $f$ be an internal face of $H$, whose boundary is formed by $k$ edges
$e_1,\dotsc,e_k$ and $k$ vertices $v_1,\dotsc,v_k$. Consider the union of the
main parts $\gamma(e_1),\dotsc,\gamma(e_k)$, together with the main parts of
their adjacent connectors and together with the vertex curves representing the
vertices $v_1,\dotsc,v_k$. In the union of these curves, there is a unique
bounded region that is adjacent to all of the curves
$\gamma(e_1),\dotsc,\gamma(e_k)$. We call this region the \emph{pseudo-face}
corresponding to~$f$, denoted by $\bar f$. Notice that an edge-curve $R''(e)$
may
only intersect the two pseudo-faces that correspond to the two faces adjacent to
$e$ in~$H$.

Let $v$ be a vertex of $G$. Let $f_0,f_1,\dotsc,f_k$ be the faces of $H$
intersected by the curve $R(v)$ in the order from one endpoint to the other. Let
$e_1,\dotsc,e_k$ be the edges of $H$ crossed by $R(v)$, where $e_i$ is adjacent
to $f_{i-1}$ and~$f_i$. Notice that we have $k\ge 2$ due to the property P3.

Since the curve $R''(v)$ must intersect the edge-curve $S''(e_i)$ for any
$i=1,\dotsc,k$, it must intersect at least one of the two pseudo-faces
$\bar{f}_{i-1}$ and $\bar{f}_i$. Also, since $R''(v)$ does not intersect any
edge-curve other than the $k$ curves $S''(e_i)$ for $i=1,\dotsc,k$, it cannot
enter into any pseudo-face other than $\bar f_i$ for $i=0,\dotsc,k$.

Note, however, that $R''(v)$ does not necessarily intersect the main part of
$S''(e_1)$  and that it does not necessarily penetrate into $\bar f_0$, and
similarly for $S''(e_k)$ and~$\bar f_k$ (see Fig.~\ref{fig:troublecase}).
Overall, we see that $R''(v)$ enters a pseudo-face $\bar f$ if and only if
$R(v)$ enters the face $f$, except possibly for the two faces $f_0$ and $f_k$.
We also see that $R''(v)$ crosses the main part of an edge-curve $S''(e)$ if
and only if $R(v)$ crosses $e$, except possibly for the two edges $e_1$
and~$e_k$. From this it is easy to see that for any $\varepsilon >0$, we may
deform $R''(v)$ so that it will belong to $\Neps(R(v))$, without changing the
boundary of any pseudo-face. Moreover, by possibly creating a `bulge' in
$\gamma(e_0)$ and $\gamma(e_k)$ as shown in Fig.~\ref{fig:troublecase}, we may
ensure that every point of $R(v)$ has a point of $R''(v)$ in its
$\varepsilon$-neighborhood. This shows that $R''$ is homeomorphic to a
representation $\Reps$ having the first property stated in the lemma.

\begin{figure}[ht]
\centering
\includegraphics[width=\textwidth]{}
\caption{A situation where $R''(v)$ does not enter the pseudo-face $\bar f_0$.}
\label{fig:troublecase}
\end{figure}

We will now show that if $\varepsilon$ is small enough, then every
representation $\Reps$ that satisfies the first property in the statement of the
lemma must admit an order-preserving mapping $\phi\colon \In(R)\to\In(\Reps)$.

Let $N(u)$ denote the set $\Neps(R(u))$. We call $N(u)$ \emph{the noodle}
of~$u$. Suppose that two curves $R(u)$ and $R(v)$ have $k$ mutual crossing
points $p_1,\dotsc,p_k$. If $\varepsilon$ is small enough, the intersection
$N(u)\cap N(v)$ has $k$ connected components $P_1,\dotsc,P_k$, each of them
being an open parallelogram, and each $P_i$ containing a unique intersection
point $p_i$. We call $P_i$ \emph{the zone of $p_i$}. By making $\varepsilon$
small, we may assume that the zones of the points in $\In(R)$ are disjoint, that
every noodle has a boundary that is a simple closed curve, and that the boundary
of a noodle does not intersect any of the zones.

Consider now the curve $\Reps(u)$ for some vertex $u\in V$. We may assume that
the endpoints of $\Reps(u)$ coincide with the endpoints of $R(u)$ (otherwise we
may shorten $\Reps(u)$ and deform it in a neighborhood of the endpoints). Choose
an orientation of $R(u)$ (i.e., choose an initial endpoint), and suppose that
$\Reps(u)$ has the same orientation. Suppose that the curve $R(u)$ has $m$
crossing points $q_1,\dotsc,q_m$, encountered in this order. Let
$Q_1,\dotsc,Q_m$ be their zones. Fix a zone $Q_i$ and consider the intersection
$\Reps(u)\cap Q_i$. This intersection is a union of subcurves of $\Reps(u)$.
Choose such a subcurve whose endpoints lie on the opposite sides of $Q_i$; if
there are more such subcurves, choose the first such subcurve visited when
$\Reps(u)$ is traced in the direction of its orientation. Call the chosen
subcurve \emph{the representative of $\Reps(u)$ in $Q_i$}, denoted by $r_i(u)$.
Note that when $\Reps(u)$ is traced from beginning to end, the representatives
are visited in the order $r_1(u),r_2(u),\dotsc,r_m(u)$.

\begin{figure}
\centering
\includegraphics[scale=0.6]{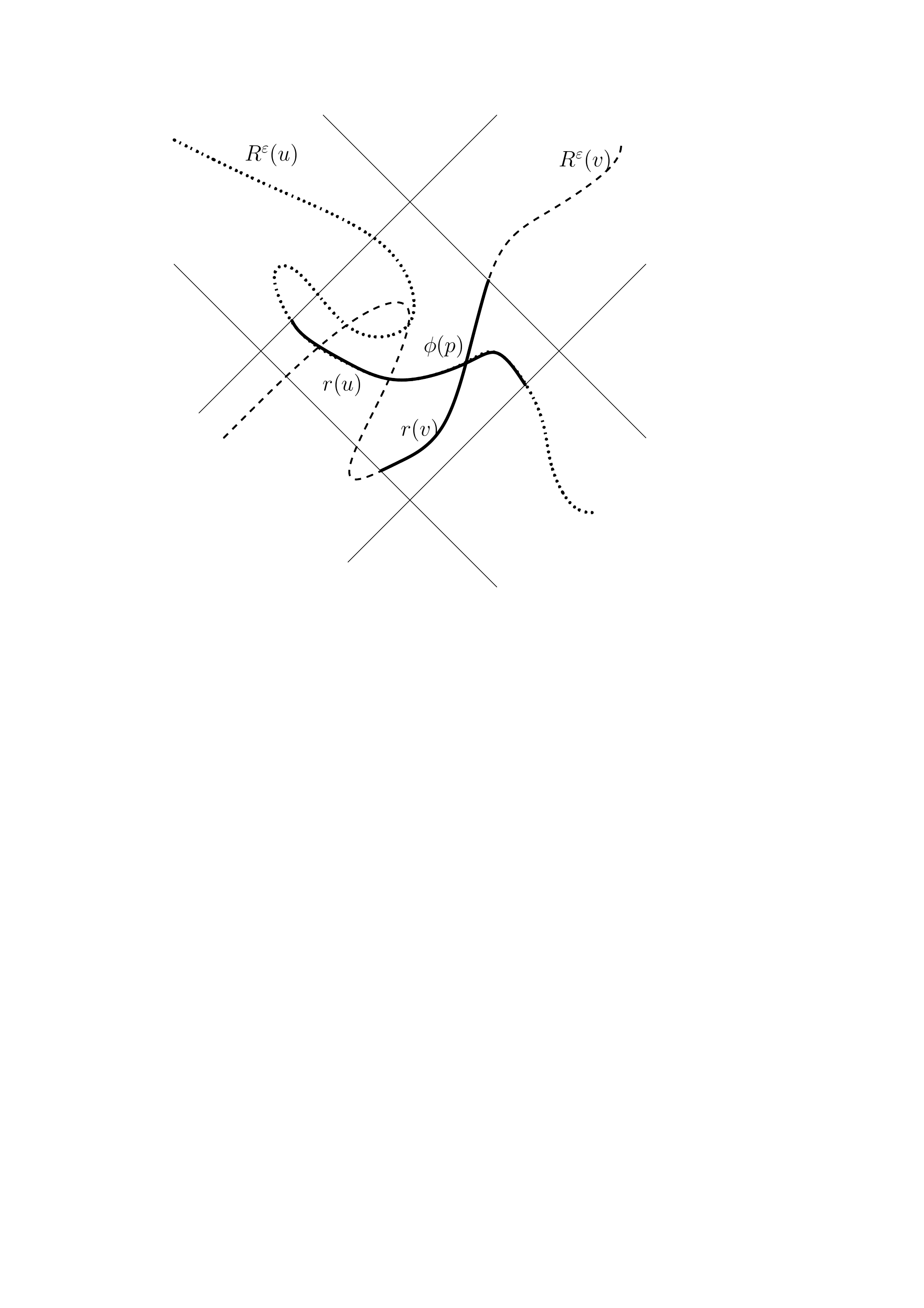}
\caption{Finding a crossing $\phi(p)$ inside the zone of
$p$. The fat parts of the curves are the
representatives.}\label{fig:noodlecross}
\end{figure}

We now define the order-preserving mapping $\phi$. See
Fig.~\ref{fig:noodlecross}. Let $p\in\In(R)$ be an intersection of two curves
$R(u)$ and $R(v)$, and let $P$ be the zone of~$p$. Let $r(u)$ and $r(v)$ be the
representatives of $R(u)$ and $R(v)$ inside~$P$. Define $\phi(p)$ to be an
arbitrary intersection of $r(u)$ and $r(v)$. We may apply a homeomorphism inside
$P$ to make sure that $\phi(p)$ coincides with $p$. By the choice of
representatives, $\phi$ is order-preserving. This proves the lemma.
\qed\end{proof}

%%%%%%%%%%%%%%%%%%%%%%%%%%%%%%%%%%%%%%%%%%%%%%%%%%%%%%%%%%
%
%   Sausage
%
%%%%%%%%%%%%%%%%%%%%%%%%%%%%%%%%%%%%%%%%%%%%%%%%%%%%%%%%%

\section{Relations between classes}
\label{sec:sausage}

With the Noodle-Forcing Lemma, we can prove our separation results.

\begin{theorem}\label{thm:grill-K2}
For any $k\ge 1$, there is a graph $G'$ that has a proper representation using
$k$-bend axis-parallel curves, but has no representation using $(k-1)$-bend
axis-parallel curves.
\end{theorem}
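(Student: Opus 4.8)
The plan is to exploit the Noodle-Forcing Lemma to build a graph whose every representation is topologically forced to look like a fixed ``target'' representation that intrinsically requires $k$ bends per curve. The key idea is to start from a small seed graph $G$ together with a chosen proper representation $R$ in which some curve $R(v_0)$ is drawn as a $k$-bend axis-parallel path, and in which the pattern of crossings along $R(v_0)$ forces any curve realizing $v_0$ to turn at least $k$ times. First I would design this seed representation $R$ so that the crossings on $R(v_0)$ alternate in a way that cannot be traced by a curve with fewer than $k$ bends: concretely, arrange crossing points $p_1,\dots,p_{k+1}$ along $R(v_0)$ so that consecutive segments of $R(v_0)$ between these points are forced into alternating horizontal/vertical directions by the geometry of the crossing curves (the other curves of $G$ acting as ``walls'' that the path must weave between).

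Next I would apply Lemma~\ref{lem-noodle} to this proper representation $R$ of $G$, obtaining the graph $G'$ and its proper $B_k$-VPG representation $R'$ (here $G'$ extends $G$ by straight-segment vertices, and $R'$ restricts to $R$ on $V(G)$, so $R'(v_0)$ is still a $k$-bend axis-parallel path and $G'$ is genuinely $B_k$-VPG). The lemma then guarantees that in \emph{any} representation of $G'$ there is an order-preserving map $\phi\colon\In(R)\to\In(\Reps)$ with $\phi(p)=p$, and each $\Reps(v)$ stays inside the $\varepsilon$-neighborhood of $R(v)$ while covering all of $R(v)$. I would use exactly this conclusion restricted to $v_0$: the curve $\Reps(v_0)$ must pass through the zones of $p_1,\dots,p_{k+1}$ in order, and confinement to $\Neps(R(v_0))$ forces it to thread the same sequence of directional ``corridors'' as $R(v_0)$.

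The crux of the argument is then the combinatorial lower bound: any axis-parallel curve that visits the forced sequence of crossing zones in the prescribed order, while remaining inside the thin neighborhood $\Neps(R(v_0))$, must change direction between each consecutive pair of zones, hence must contain at least $k$ bends. I would phrase this as a local direction-counting lemma: if the segment of the noodle near $p_i$ forces a horizontal passage and the segment near $p_{i+1}$ forces a vertical passage (by the placement of the walls), then $\Reps(v_0)$ realizes a bend between them, and summing over the $k$ alternations gives $\ge k$ bends. Consequently $G'$ has no $(k-1)$-bend axis-parallel representation, while $R'$ shows it has a $k$-bend one, proving the theorem.

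The main obstacle I anticipate is \textbf{getting the seed representation $R$ exactly right} so that the forced alternation of directions is airtight: I must ensure both that $R(v_0)$ genuinely needs $k$ bends \emph{and} that the surrounding curves constrain the neighborhood $\Neps(R(v_0))$ tightly enough that no clever low-bend curve can shortcut through a different homotopy class. In particular I would need to verify that the ``walls'' force passages of a fixed orientation at each zone rather than merely forcing the curve to pass \emph{through} the zone, since order-preservation alone only pins the sequence of crossings, not the local direction of travel. Designing the gadget (likely the ``sausage'' structure promised in the section) and proving this orientation-forcing property is where the real work lies; the rest is bookkeeping with the Noodle-Forcing Lemma.
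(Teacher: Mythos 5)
You have the right architecture---a seed representation with $k+1$ crossings $p_1,\dotsc,p_{k+1}$ on a $k$-bend staircase, grilled by the Noodle-Forcing Lemma, followed by a bend-counting argument---but the decisive step is exactly the one you leave open, and the mechanism you propose for it is not delivered by the lemma. The lemma gives you only (i) mutual $\varepsilon$-confinement of $\Reps(v)$ and $R(v)$ and (ii) an order-preserving map $\phi$ on intersection points; it says nothing about the \emph{direction} in which $\Reps(v_0)$ passes through a zone, and adding ``wall'' curves to the seed graph does not change this, since those walls are themselves only constrained up to confinement and crossing order, and a representative subcurve may cross a zone with a single tiny segment of either orientation (or with extra wiggles) as long as it connects the required opposite sides. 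So the ``local direction-counting lemma'' (horizontal passage at $p_i$, vertical at $p_{i+1}$, hence a bend in between) is precisely the unproved crux, not bookkeeping, and as set up it would need a genuinely different justification.

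The paper closes this gap without any orientation control by taking the seed graph to be $K_2$ with \emph{two} mutually crossing weakly increasing $k$-bend staircases $R(u)$ and $R(v)$ meeting in $p_1,\dotsc,p_{k+1}$. In a hypothetical $(k-1)$-bend representation $R''$, for each $i=1,\dotsc,k$ the subcurves $s_i(u)$ and $s_i(v)$ between $\phi(p_i)$ and $\phi(p_{i+1})$ form a closed curve $c_i$; since the two subcurves cannot completely overlap, $c_i$ bounds a nonempty region, and an axis-parallel closed curve bounding a region has at least four corners, at least two of which are bends of $s_i(u)$ or $s_i(v)$ distinct from the junctions $\phi(p_i),\phi(p_{i+1})$. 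Summing over the $k$ internally disjoint links yields at least $2k$ bends on $R''(u)$ and $R''(v)$ together, exceeding the available budget of $2(k-1)$. This global count over the \emph{pair} of curves is what replaces your per-zone orientation forcing. (A single-curve variant of your idea can be salvaged from property (i) alone---a curve confined to a thin staircase-shaped corridor that comes $\varepsilon$-close to every point of the staircase must bend $k$ times---but that requires a careful corridor-traversal argument, not the zone/wall mechanism you describe.)
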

\begin{proof}
Consider the graph $K_2$ consisting of a single edge $uv$, with a
representation $R$ in which both $u$ and $v$ are represented by weakly
increasing $k$-bend staircase curves that have $k+1$ common intersections
$p_1,\dotsc,p_{k+1}$, in left-to-right order, see Fig.\ref{fig:sausage}. We refer to this representation as a \emph{sausage} due to it resembling sausage links.

\begin{figure}
\centering
\includegraphics[scale=1]{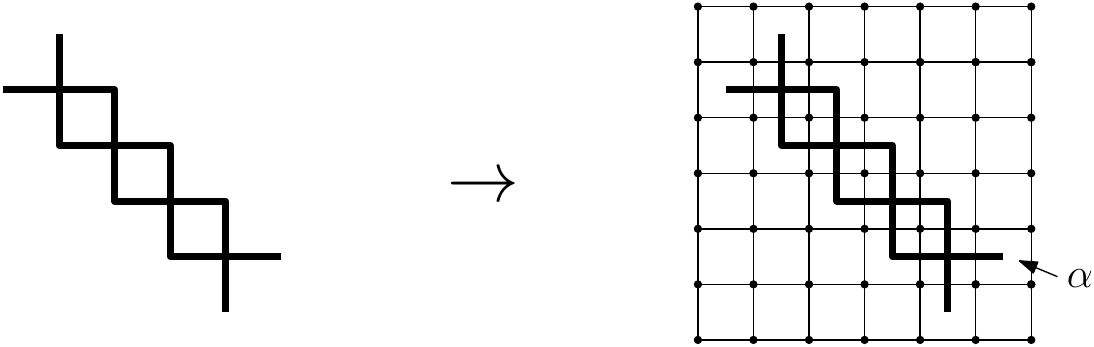}
\caption{The sausage representation for $k = 3$ and its grilled version.}
\label{fig:sausage}
\end{figure}

We now grill the sausage (i.e., we apply the Noodle-Forcing Lemma to $K_2$ and $R$) 
to obtain a graph $G'$ with a $k$-bend representation $R'$.
We claim that $G'$ has no $(k-1)$-bend
representation. Assume for contradiction that there is a $(k-1)$-bend
representation $R''$ of~$G'$. Lemma~\ref{lem-noodle} then implies that there is
an order-preserving mapping $\phi\colon \In(R)\to\In(R'')$. Let $s_i(u)$ be the
subcurve of $R''(u)$ between the points $\phi(p_i)$ and $\phi(p_{i+1})$, and
similarly for $s_i(v)$ and $R''(v)$. Consider, for each $i=1,\dotsc,k$, the
union $c_i=s_i(u)\cup s_i(v)$. We know from Lemma~\ref{lem-noodle} that $s_i(u)$
and $s_i(v)$ cannot completely overlap, and therefore the closed curve $c_i$
must surround at least one nonempty bounded region of the plane. Therefore $c_i$
contains at least two bends different from $\phi(p_i)$ and $\phi(p_{i+1})$. We
conclude that $R''(u)$ and $R''(v)$ together have at least $2k$ bends, a
contradiction.
\qed\end{proof}

A straightforward consequence is the following.
\begin{corollary}\label{cor:sep}
For every $k$, $B_k$-VPG $\subsetneq$ $B_{k+1}$-VPG.
\end{corollary}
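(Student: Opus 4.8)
The plan is to derive Corollary~\ref{cor:sep} directly from Theorem~\ref{thm:grill-K2}, since the theorem already does the heavy lifting of exhibiting, for each $k\ge 1$, a graph separating the $k$-bend and $(k-1)$-bend representation worlds. The inclusion $B_k$-VPG $\subseteq$ $B_{k+1}$-VPG is immediate from the definitions: any $k$-bend path is in particular a $(k+1)$-bend path (a path with at most $k$ bends has at most $k+1$ bends), so a $B_k$-VPG representation is verbatim a $B_{k+1}$-VPG representation. Thus the entire content of the corollary is the strictness of the inclusion, i.e.\ producing a witness graph in $B_{k+1}$-VPG $\setminus$ $B_k$-VPG for every $k\ge 0$.

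First I would handle the case $k\ge 0$ by applying Theorem~\ref{thm:grill-K2} with parameter $k+1$ in place of the theorem's $k$. This yields a graph $G'$ that has a proper representation using $(k+1)$-bend axis-parallel curves but no representation using $k$-bend axis-parallel curves. The first property says exactly that $G'\in B_{k+1}$-VPG, while the second says $G'\notin B_k$-VPG. Combined with the inclusion above, this gives $B_k$-VPG $\subsetneq$ $B_{k+1}$-VPG and finishes the argument in one line for all $k\ge 0$. The only subtlety worth a remark is that the theorem is stated for $k\ge 1$, so applying it with parameter $k+1$ is legitimate precisely when $k+1\ge 1$, i.e.\ $k\ge 0$, which covers every case the corollary claims.

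I do not anticipate a genuine obstacle here, since the corollary is labeled a ``straightforward consequence'' and all the difficulty lives in the Noodle-Forcing Lemma and the bend-counting argument of Theorem~\ref{thm:grill-K2}. The one point requiring a small amount of care is the translation between the theorem's language (``proper representation using $j$-bend axis-parallel curves'') and the definition of $B_j$-VPG: a VPG representation uses paths in the rectangular grid rather than arbitrary axis-parallel piecewise-linear curves, so I would note that axis-parallel curves with integer-grid combinatorics can be placed on a sufficiently fine grid, and that a proper representation in the sense of Section~\ref{sec:nfl} is in particular a VPG representation after this rescaling. Making this identification explicit is the only ``step'' beyond quoting the theorem, and it is routine.
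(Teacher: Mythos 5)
Your proposal is correct and is exactly the argument the paper intends: the paper offers no separate proof of the corollary, calling it a ``straightforward consequence'' of Theorem~\ref{thm:grill-K2}, and your reindexing (applying the theorem with parameter $k+1$), the trivial inclusion $B_k$-VPG $\subseteq B_{k+1}$-VPG, and the remark about rescaling axis-parallel curves onto a grid fill in precisely the routine details left implicit.
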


Because two straight-line segments in the plane cross at most once, the
Noodle-Forcing Lemma also implies the following.
\begin{corollary}
For every $k \ge 1$, $B_k$-VPG $\not\subset$ SEG.
\end{corollary}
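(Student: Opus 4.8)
The plan is to reuse the grilled graph $G'$ built in the proof of Theorem~\ref{thm:grill-K2}. For a fixed $k\ge 1$ this $G'$ admits a proper $k$-bend representation $R'$, so $G'\in B_k$-VPG, and it contains the edge $uv$ of $K_2$ whose two curves $R(u),R(v)$ cross in the $k+1$ points $p_1,\dots,p_{k+1}$ of the sausage. It therefore suffices to show that $G'$ has \emph{no} SEG representation; together with $G'\in B_k$-VPG this witnesses $B_k$-VPG $\not\subset$ SEG, and since $k\ge 1$ is arbitrary the corollary follows.

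First I would mimic the proof of Theorem~\ref{thm:grill-K2}. Assume for contradiction that $R''$ is a SEG representation of $G'$. Lemma~\ref{lem-noodle} transforms $R''$ by a homeomorphism of the plane and a circular inversion into a representation $\Reps$ admitting an injective, order-preserving map $\phi\colon \In(R)\to\In(\Reps)$. Applying $\phi$ to $p_1,\dots,p_{k+1}$, which all lie on both $R(u)$ and $R(v)$, yields $k+1$ pairwise distinct points $\phi(p_1),\dots,\phi(p_{k+1})$ lying on both $\Reps(u)$ and $\Reps(v)$, so that $\Reps(u)\cap\Reps(v)$ has at least $k+1\ge 2$ points. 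Since homeomorphisms and circular inversions are bijections, they preserve the cardinality of the intersection of any two curves; hence $R''(u)\cap R''(v)$ also contains at least two distinct points. Thus the two \emph{segments} $R''(u)$ and $R''(v)$ meet in at least two points.

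The main obstacle is exactly this last step, because ``two segments cross at most once'' is false for collinear segments with overlapping interiors: they share an entire subinterval, i.e. infinitely many intersection points, a configuration fully compatible with an order-preserving $\phi$. I would therefore first reduce to SEG representations in \emph{general position}, in which no two segments are collinear with overlapping interiors, and in which two distinct segments consequently meet in at most one point; against such a representation the conclusion ``$R''(u)$ and $R''(v)$ share at least two points'' is an immediate contradiction. To justify the reduction I would perturb: if some pair of segments overlaps along an interval $I$, rotate one of them by an arbitrarily small angle about an interior point $x$ of $I$. For a sufficiently small angle this destroys the overlap while retaining the transversal crossing at $x$, preserves every transversal crossing with the finitely many remaining segments (each such crossing point merely shifts slightly), and creates no new intersections; hence the intersection graph is unchanged. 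Iterating over all overlapping pairs produces a general-position SEG representation of $G'$, completing the argument. The bulk of the care is in verifying that this perturbation preserves the intersection graph, which is where I expect the real work to lie.
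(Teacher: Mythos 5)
Your overall route is the paper's route: take the grilled sausage graph $G'$ of Theorem~\ref{thm:grill-K2}, note $G'\in B_k$-VPG, and use the order-preserving map $\phi$ from Lemma~\ref{lem-noodle} to force $R''(u)$ and $R''(v)$ to meet in at least $k+1\ge 2$ points, contradicting the behaviour of straight-line segments. You also correctly spot the degenerate case that the paper's one-line justification (``two segments cross at most once'') glosses over, namely collinear segments overlapping in an interval, which is perfectly compatible with an injective order-preserving $\phi$.

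However, the patch you propose is where the trouble is. Your perturbation step --- rotate one of two overlapping segments by a tiny angle about an interior point $x$ of the overlap --- does not obviously preserve the intersection graph, and in some configurations it genuinely fails: if a third segment $r$ meets the rotated segment $s$ only in an endpoint-touching (an endpoint of $r$ lying on $s$, which proper SEG representations are allowed to contain), an arbitrarily small rotation of $s$ can push $s$ off that endpoint and delete the edge $rs$; likewise, if $r$ is a third segment collinear with $s$ whose overlap with $s$ does not contain $x$, the rotation reduces $s\cap r$ to at most the single point $x\notin r$, again deleting an edge. So ``creates no new intersections and preserves all old ones'' needs the full general-position machinery (first extending segments to turn touchings into crossings, then perturbing directions), which is exactly the ``real work'' you defer. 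The cleaner fix is already inside Lemma~\ref{lem-noodle}: property~1 puts $\Reps(u)\subseteq\Neps(R(u))$ and $\Reps(v)\subseteq\Neps(R(v))$, so $\Reps(u)\cap\Reps(v)$ is contained in the union of the $k+1$ pairwise disjoint zones $P_1,\dotsc,P_{k+1}$, and it meets each of them (at $\phi(p_i)=p_i$); hence $\Reps(u)\cap\Reps(v)$ is disconnected, and so is its preimage $R''(u)\cap R''(v)$ under the homeomorphism and inversion. But the intersection of two line segments is an intersection of convex sets, hence convex and connected --- even when the segments overlap. This disposes of the overlap case with no perturbation at all, and is the argument the paper's phrase ``cross at most once'' is really standing in for.
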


This raises a natural question: Is there some $k$ such that every SEG 
graph is contained in $B_k$-VPG? The following theorem answers it negatively.

\begin{theorem}
\label{thm:3dir}
For every $k$, there is a graph which belongs to 3-DIR but not to $B_k$-VPG.
\end{theorem}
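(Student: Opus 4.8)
My plan is to exhibit, for each $k$, a 3-DIR graph whose every intersection representation is forced to contain many bends, via the Noodle-Forcing Lemma together with a counting argument on bounded triangular faces. The starting observation is that grilling (Lemma~\ref{lem-noodle}) only introduces horizontal and vertical segments: hence if the core representation $R$ uses only the three directions horizontal, vertical, and slope $-1$, then the grilled representation $R'$ is again an arrangement of straight segments in these same three directions, so the grilled graph $G'$ still lies in 3-DIR. Thus it suffices to design a proper 3-DIR core representation $R$ that forces arbitrarily many bends after grilling.

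For the core I would use a \emph{waffle} $W_m$: take $m$ horizontal segments along the lines $y=1,\dots,m$, $m$ vertical segments along $x=1,\dots,m$, and a family of $\Theta(m)$ parallel diagonal segments of slope $-1$ along lines $x+y=c$ for suitable half-integers $c$ (so that no diagonal passes through a grid vertex, keeping the representation proper). Choosing the diagonals so that essentially every unit cell of the grid is cut by one diagonal, the arrangement has $\Theta(m^2)$ bounded \emph{triangular} faces, each bounded by exactly one horizontal, one vertical and one diagonal curve and having the three pairwise crossings of these curves as its only corners. The graph $W_m$ is a 3-DIR graph and $R$ is a proper representation, so we may grill it to obtain $G'=G'_m\supseteq W_m$ (induced), which by the remark above is still 3-DIR.

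The heart of the argument is a bend-counting contradiction. Suppose $G'_m$ had a $B_k$-VPG representation $R''$. By Lemma~\ref{lem-noodle} there is an order-preserving $\phi\colon \In(R)\to\In(R'')$, and, more importantly, the combinatorial arrangement of the core curves is preserved: for each triangular face of $R$ bounded by three core curves $a,b,c$ meeting at crossings $p,q,r$, the corresponding axis-parallel curves $a'',b'',c''$ of $R''$ still cross at $\phi(p),\phi(q),\phi(r)$, and the three sub-arcs between these points bound a nonempty region. Now I use the elementary fact that a simple closed rectilinear curve has at least four convex ($90^\circ$) corners, while a transversal crossing of two axis-parallel curves contributes only a single corner; since such a triangular boundary offers only three crossing-corners, at least one of its corners must be a \emph{bend} of $a''$, $b''$ or $c''$. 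Hence every one of the $\Theta(m^2)$ triangular faces forces a bend lying on a core curve. Because a bend of a single curve is a corner of at most two faces, summing over all triangular faces gives at least $\Theta(m^2)/2$ distinct bends on the core curves. On the other hand, there are only $\Theta(m)$ core curves, each with at most $k$ bends in a $B_k$-VPG representation, for a total of at most $\Theta(m)\cdot k$ bends. For $m$ large enough relative to $k$ (concretely $m=\Theta(k)$ suffices, since $\Theta(m^2)/2 > \Theta(m)\cdot k$) these two bounds are contradictory, so $G'_m$ is not $B_k$-VPG. Taking such an $m=m(k)$ proves the theorem.

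The step I expect to be the main obstacle is making rigorous the claim that the triangular faces survive in $R''$ as bounded regions bounded by exactly the three corresponding sub-arcs. Two technical points must be handled: first, $R''$ need not be proper, so $a''$ and $b''$ might cross more than once, and one must check that the relevant sub-arcs still enclose a nonempty bounded region (as in the ``cannot completely overlap'' step of Theorem~\ref{thm:grill-K2}, any extra crossing only creates additional bends and so cannot help the adversary); second, the circular inversion in Lemma~\ref{lem-noodle} can send one bounded face to the unbounded face, but this destroys at most one triangle and is irrelevant to the asymptotic count. Establishing the face-preservation cleanly --- most conveniently by invoking property~(1) of Lemma~\ref{lem-noodle} for small $\varepsilon$ to see that the core arrangement of the transformed representation is combinatorially identical to that of $R$, and then transporting faces back to $R''$ through the fixing homeomorphism --- is the crux of the proof.
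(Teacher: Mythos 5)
Your proposal is correct and follows essentially the same route as the paper: the same three-direction waffle arrangement with $\Theta(n^2)$ triangular faces, grilled via the Noodle-Forcing Lemma (noting the added segments are axis-parallel so the result stays in 3-DIR), and the same counting argument that each surviving triangular face forces a bend on a core curve while only $O(nk)$ bends are available. The only cosmetic differences are that the paper derives the forced bend by observing that each of the three transported sub-arcs contains a point private to it (so they cannot all be straight axis-parallel segments), rather than your four-convex-corners argument, and it avoids your factor-of-two correction by choosing triangles whose boundaries pairwise meet in at most one point.
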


\begin{proof}
We fix an arbitrary $k$. Consider, for an integer $n$, a representation
$R\equiv R(n)$ formed by $3n$ segments, where $n$ of them are horizontal, $n$
are vertical and $n$ have a slope of 45 degrees. Suppose that every two segments
of $R$ with different slopes intersect, and their intersections form the
regular pattern depicted in Figure~\ref{fig:3dir} (with a little bit of creative
fantasy this pattern resembles a waffle, especially when viewed under a linear
transformation).

 \begin{figure}[ht]
 \centering
 \includegraphics[width=\textwidth]{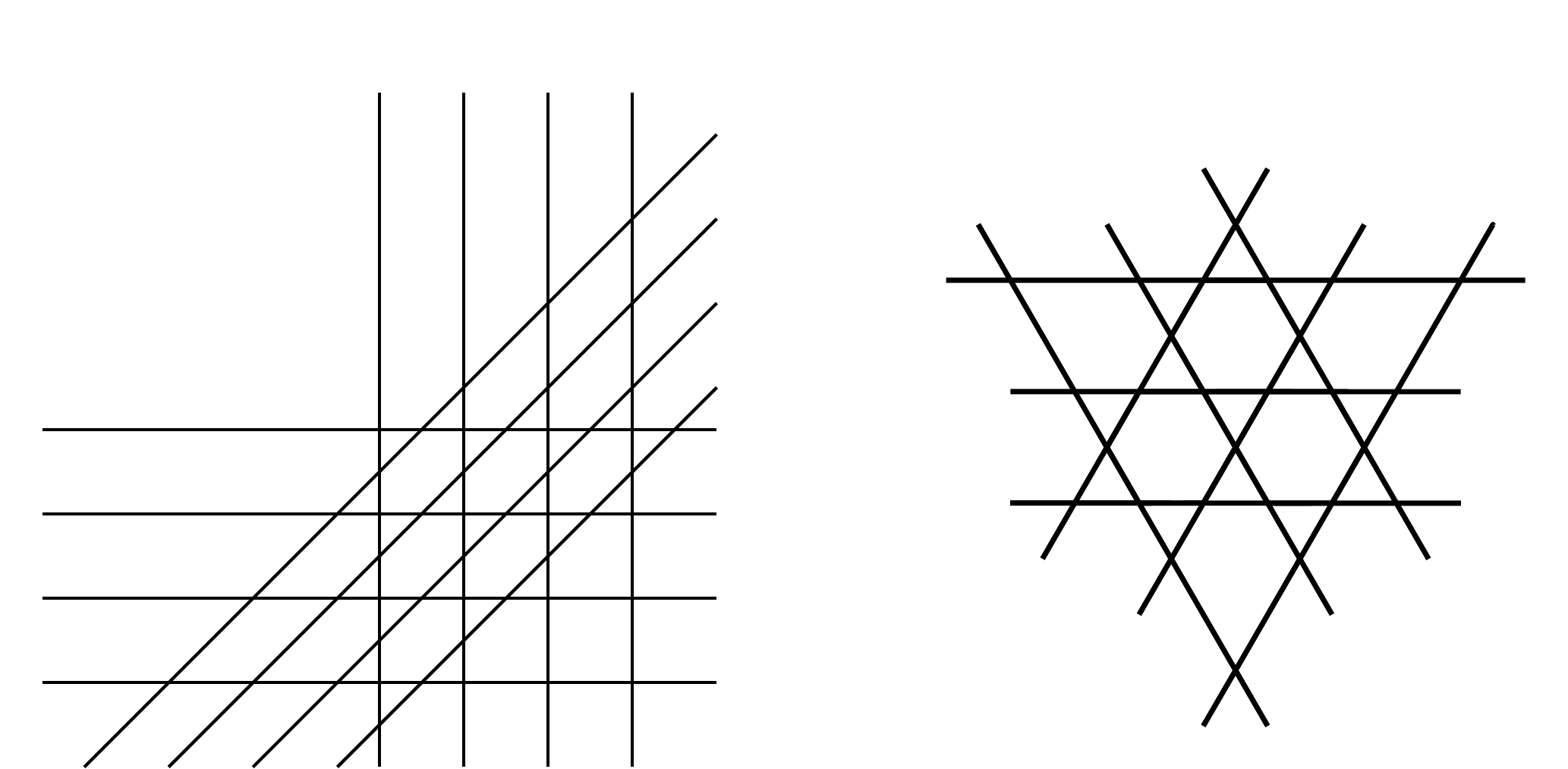}
 \caption{The `waffle' representation $R$ from Theorem~\ref{thm:3dir} and its
 transformed representation.}
 \label{fig:3dir}
 \end{figure}

Note that the representation $R$ forms $\Omega(n^2)$ empty internal
triangular faces bounded by segments of~$R$, and the boundaries of these
faces intersect in at most a single point. Suppose that $n$ is large enough, so
that there are more than $3kn$ such triangular faces. Let $G$ be the graph
represented by~$R$.

The representation $R$ is proper, so we can apply the Noodle-Forcing
Lemma to $R$ and $G$, obtaining a graph $G'$ together with its 3-DIR
representation $R'$. We claim that $G'$ has no $B_k$-VPG representation.

Suppose for contradiction that there is a $B_k$-VPG representation $R''$
of~$G'$. We will show that the $3n$ curves of $R''$ that represent the vertices
of $G$ contain together more than $3kn$ bends.

From the Noodle-Forcing Lemma, we deduce that there exists an
order-preserving mapping~$\phi\colon \In(R_n)\to\In(R''_n)$. Let $T$ be a
triangular face of the representation $R$. The boundary of $T$ consists of three
intersection points $p, q, r\in \In(R)$ and three subcurves $a, b, c$. The three
intersection points $\phi(p)$, $\phi(q)$ and $\phi(r)$ determine the
corresponding subcurves $a''$, $b''$ and $c''$ in~$R''$.

The Noodle-Forcing Lemma implies that there is a homeomorphism $h$ which sends
$a''$, $b''$, and $c''$ into small neighborhoods of $a$, $b$ and $c$,
respectively. This shows that each of the three curves $a''$, $b''$ and $c''$
contains a point that does not belong to any of the other two curves. This in
turn shows that at least one of the three curves is not a segment, i.e., it has
a bend in its interior.

Since the triangular faces of $R$ have non-overlapping boundaries, and since
$\phi$ is order-preserving, we see that for each triangular face of $R$ there
is at least one bend in $R''$ belonging to a curve representing a vertex of~$G$.
Since $G$ has $3n$ vertices and $R$ determines more than $3kn$ triangular faces,
we conclude that at least one curve of $R''$ has more than $k$ bends, a
contradiction.
\qed\end{proof}

%%%%%%%%%%%%%%%%%%%%%%%%%%%%%%%%%%%%%%%%%%%%%%%%%%%%%%%%%%
%
%   Hardness Results
%
%%%%%%%%%%%%%%%%%%%%%%%%%%%%%%%%%%%%%%%%%%%%%%%%%%%%%%%%%%

\section{Hardness Results}
\label{sec:reduction}

In this section we strengthen the separation result of Corollary~\ref{cor:sep} by showing
that not only are
the classes $B_k$-VPG and $B_{k+1}$-VPG different, but providing a $B_{k+1}$-VPG
representation does not
help in deciding $B_k$-VPG membership. This also settles the conjecture on
NP-hardness of recognition of
these classes stated in~\cite{Asinowski2012}, in a considerably stronger form than
it was asked.

\begin{theorem}\label{thm:hardness}
For every $k\ge 0$, deciding membership in $B_k$-VPG is NP-complete even if the
input graph is given with a $B_{k+1}$-VPG representation.
\end{theorem}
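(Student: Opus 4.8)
The plan is to prove membership in NP and NP-hardness separately; the promise that a $B_{k+1}$-VPG representation accompanies the input affects neither half, so I treat the bare decision problem.

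For membership in NP, I would show that if a graph $G$ on $n$ vertices admits a $B_k$-VPG representation, then it admits one of description size polynomial in $n$ and $k$. Each curve is a $k$-bend path, hence determined by its at most $k+1$ maximal segments; since only the combinatorial arrangement of these $O(nk)$ segments matters, the underlying grid may be normalized to have $O(nk)$ rows and columns. A nondeterministic algorithm guesses such a representation and checks in polynomial time that every path has at most $k$ bends and that the resulting intersection graph is exactly $G$. (This is the bend-bounded analogue of the argument placing STRING recognition in NP~\cite{schaeffer}.)

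For NP-hardness I would reduce from a planar variant of $3$-satisfiability, of the kind underlying the NP-completeness of $2$-DIR $=B_0$-VPG recognition~\cite{plsat}. Given a formula $\Phi$, the reduction outputs a graph $G=G(\Phi)$ together with an explicit $B_{k+1}$-VPG representation $R'$, engineered so that $G\in B_k$-VPG iff $\Phi$ is satisfiable. The backbone of every gadget is the grilled sausage of Theorem~\ref{thm:grill-K2}: by the bend-counting argument there, a sausage forces its two designated curves, in any $B_k$-VPG representation, to spend all $k$ of their permitted bends along a prescribed staircase, leaving no slack. I would assemble from bend-tight sausages three families of gadgets, each with a ready $B_{k+1}$-VPG drawing: \emph{variable} gadgets, in which a distinguished curve can, in $B_k$, save its single spare bend in exactly one of two ways, read as the truth value of the variable; \emph{wire} gadgets, which relay a committed configuration along a chain of sausages without creating or absorbing slack; and \emph{clause} gadgets, arranged so that a $B_k$-VPG representation of the clause exists precisely when some incident literal arrives in its satisfying configuration.

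Correctness would then rest entirely on the Noodle-Forcing Lemma (Lemma~\ref{lem-noodle}): after grilling the full construction, any $B_k$-VPG representation $R''$ of $G$ can be transformed into $\Reps$ admitting an order-preserving $\phi\colon\In(R)\to\In(\Reps)$ whose image coincides pointwise with $\In(R)$. This pins the topological layout of every gadget in $R''$ to that of $R'$ up to an $\varepsilon$-perturbation, so the only remaining freedom is the local binary choice inside each variable gadget; the bend-counting of Theorem~\ref{thm:grill-K2} then forces these choices to be consistent along wires and to satisfy every clause, else some curve exceeds its $k$-bend budget. The converse is immediate: a satisfying assignment prescribes a configuration in each variable gadget that yields a global $B_k$-VPG representation. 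I expect the main obstacle to be precisely this gadget engineering --- guaranteeing simultaneously that each gadget has a genuine $B_{k+1}$-VPG drawing (so the instance is valid) and that its bend budget is exactly tight in $B_k$, so that a single saved bend faithfully encodes a boolean value and no stray representation can evade the constraints. Ensuring that the sausage and clause boundaries meet the non-overlap hypotheses of Lemma~\ref{lem-noodle}, and that wires neither leak nor hoard bends, are the delicate points.
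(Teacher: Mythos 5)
Your NP-membership sketch is fine (the paper itself dismisses that half in one sentence), but the hardness half has a genuine gap: the variable, wire, and clause gadgets on which everything rests are never constructed, and the architecture you propose for them is in tension with the tools you invoke. If you grill the \emph{entire} construction, Lemma~\ref{lem-noodle} forces the topology of every representation of the result to agree with the grilled drawing up to homeomorphism and $\varepsilon$-perturbation; the only freedom left is metric (where the bends sit inside the noodles), not topological. So the ``local binary choice'' in a variable gadget would have to be a choice of where a single spare bend is placed, and you would have to engineer sausage-based gadgets in which that geometric choice is well defined, propagates along wires without leaking or hoarding slack, and is tested by clauses --- none of which is exhibited, and you yourself flag it as the main obstacle. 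Note also that Theorem~\ref{thm:grill-K2} only yields a \emph{total} count of $2k$ bends over the two curves of a grilled sausage; to get the per-curve, per-location rigidity a spare-bend encoding would need, you must add the argument the paper makes for its pins (each of the $k$ enclosed regions costs two bends and $2k$ is also the total budget, so every bend of each curve is trapped between the extreme crossings and the portions outside them are straight).

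The paper takes a different and much lighter route worth comparing against: it designs no new SAT reduction at all. It reuses the reduction of~\cite{plsat}, in which $G_\Phi$ is a grid intersection graph iff $\Phi$ is satisfiable, observes that $G_\Phi$ always lies in $B_1$-VPG (settling $k=0$), and for $k>0$ grills only small rigid pieces --- pins and clothespins --- whose sole purpose is to force designated curve portions (the tips) to be straight in any $B_k$-VPG representation. Replacing each vertex of $G_\Phi$ by a clothespin then makes any $B_k$-VPG representation of the new graph induce a grid intersection representation of $G_\Phi$ on the tips, so satisfiability is encoded by the \emph{topological} freedom of the ungrilled part rather than by bend placement inside grilled gadgets. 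Your plan is not obviously unsalvageable, but as written it is a strategy with its central constructions missing, whereas the clothespin construction is precisely the concrete content of the paper's proof.
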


\begin{proof}
It is not difficult to see that recognition of $B_k$-VPG is in NP and therefore
we will be concerned in showing NP-hardness only. 
We use the NP-hardness reduction developed in~\cite{plsat} for showing that
recognizing grid intersection graphs is NP-complete. Grid intersection graphs
are intersection graphs of vertical and horizontal segments in the plane with
additional restriction that no two segments of the same direction share a common
point. Thus these graphs are formally close but not equal to $B_0$-VPG graphs
(where paths of the same direction are allowed to overlap). However, bipartite
$B_0$-VPG graphs are exactly grid intersection graphs. This follows from a
result of Bellantoni et al.~\cite{whitesides} who proved that bipartite
intersection graphs of axes parallel rectangles are exactly grid intersection
graphs.

\begin{figure}
\centering
\includegraphics[scale=0.8]{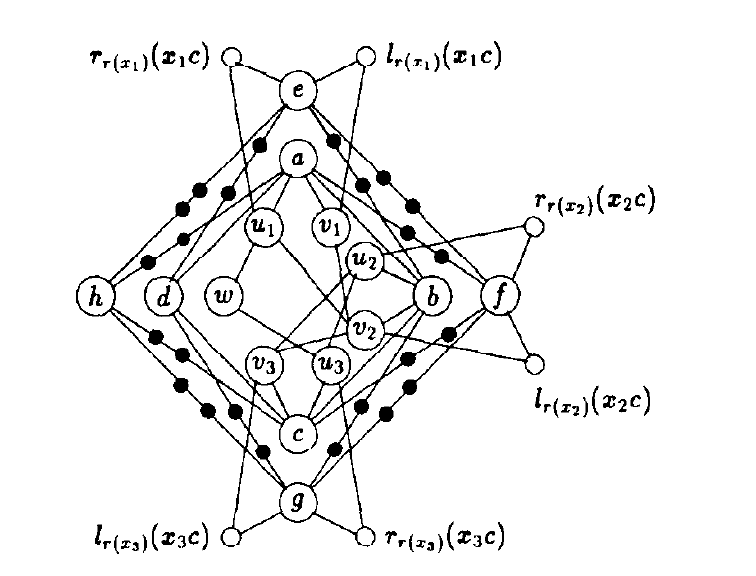}
\caption{The clause gadget reprinted from \cite{plsat}}\label{fig:clausegadget}
\end{figure}

\begin{figure}
\centering
\includegraphics[scale=0.8]{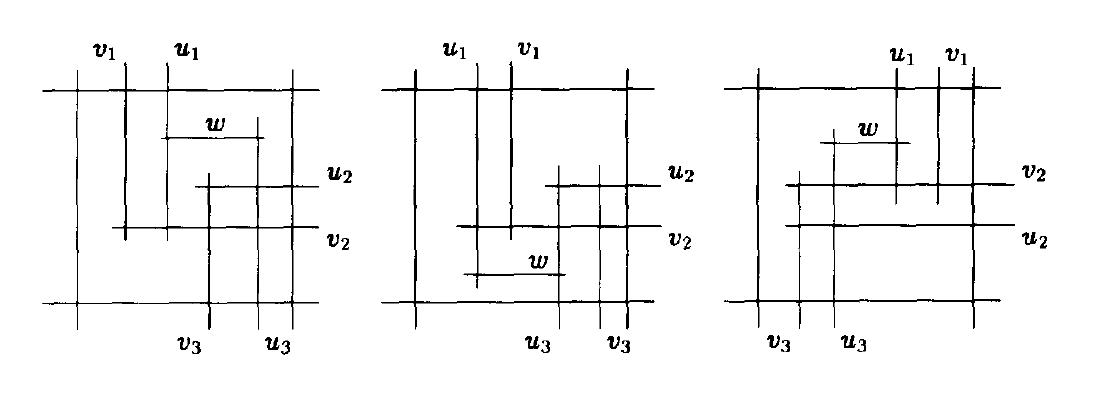}
\caption{The representations of satisfied clauses reprinted from
\cite{plsat}.}\label{fig:onetrue}
\end{figure}

\begin{figure}
\centering
\includegraphics[scale=0.8]{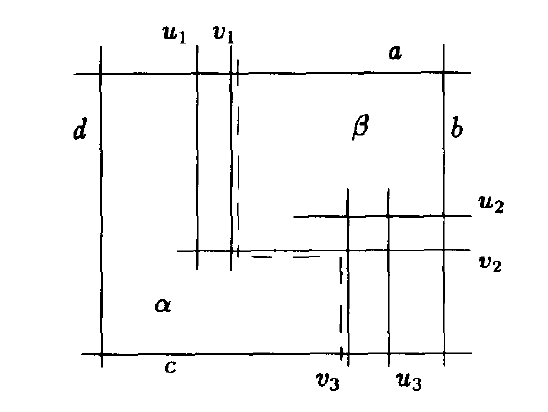}
\caption{The problem preventing the representation of an unsatisfied clause
reprinted from \cite{plsat}.}\label{fig:falseclause}
\end{figure}

\begin{figure}
\centering
\includegraphics[scale=0.7]{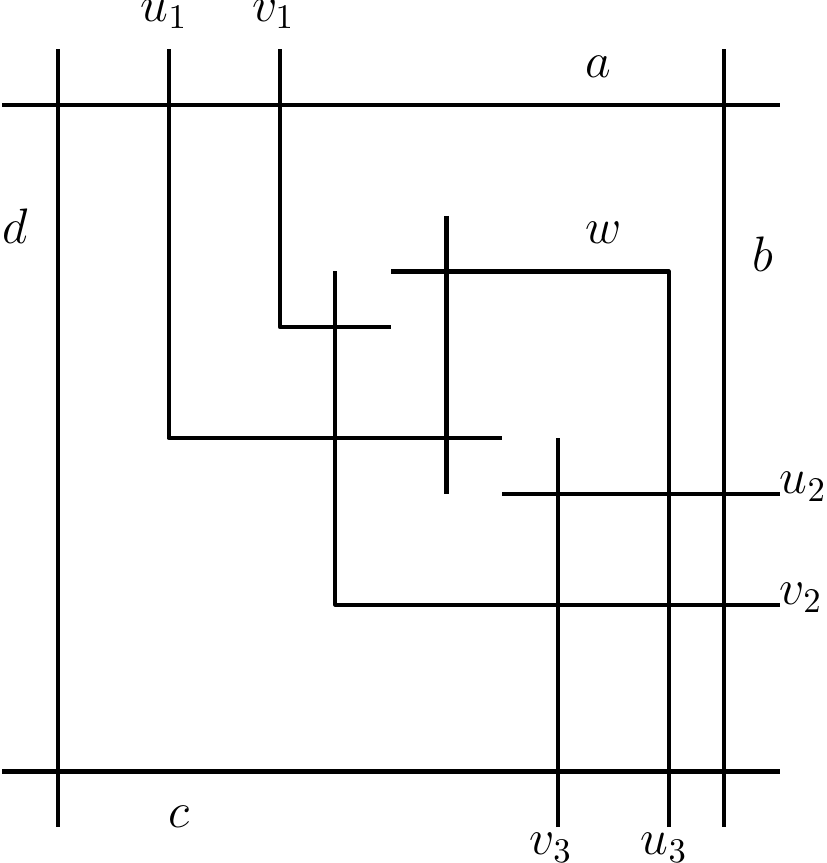}
\caption{The representation of an unsatisfied clause gadget via curves with one
bend.}\label{fig:bend-clause}
\end{figure}

The reduction in~\cite{plsat} constructs, given a Boolean formula $\Phi$, a
graph $G_{\Phi}$ which is a grid intersection graph if and only if $\Phi$ is
satisfiable. In arguing about this, a representation by vertical and horizontal
segments is described for a general layout of $G_{\Phi}$ for which it is also
shown how to represent its parts corresponding to the clauses of the formula,
referred to as {\em clause gadgets}, if at least one literal is true. The clause
gadget is reprinted with a generous approval of the author in
Fig.~\ref{fig:clausegadget}, while Fig.~\ref{fig:onetrue} shows the grid
intersection representations of satisfied clauses, and Fig.~\ref{fig:falseclause}
shows the problem when all literals are false.  In
Fig.~\ref{fig:bend-clause}, we show that in the case of all false literals,
the clause gadget can be represented by grid paths with at most 1 bend each.
It follows that $G_{\Phi}\in B_1$-VPG and a 1-bend representation can be
constructed in polynomial time. Thus, recognition of $B_0$-VPG is NP-complete
even if the input graph is given with a $B_{1}$-VPG representation.

We use a similar approach for arbitrary $k>0$ with a help of the Noodle-Forcing
Lemma. We grill the same representation $R$ of $K_2$ as in the proof of
Theorem~\ref{thm:grill-K2}. We call the resulting graph $P(u)$ where $u$ is one
of the vertices of the $K_2$, the one whose curve in $R$ is ending in a boundary
cell denoted by $\alpha$ in the schematic Fig.~\ref{fig:sausage}. We call this
graph the {\em pin} since it follows from Lemma~\ref{lem-noodle} that it has a
$B_k$-VPG representation such that the bounding paths of the cell $\alpha$ wrap
around the grill and the last segment of $R(u)$ extends arbitrarily far (see the
schematic Fig.~\ref{fig:sausagepin}). We will refer to this extending segment as
the {\em tip} of the pin. It is crucial to observe that in any $B_k$-VPG
representation $R'$ of $P(u)$ all bends of $R'(u)$ are consumed between the
crossing points with the curve representing the other vertex of $K_2$ and hence
the part of $R'(u)$ that lies in the $\alpha$ cell of $R'$ is necessarily
straight.

\begin{figure}
\centering
\includegraphics[scale=0.7]{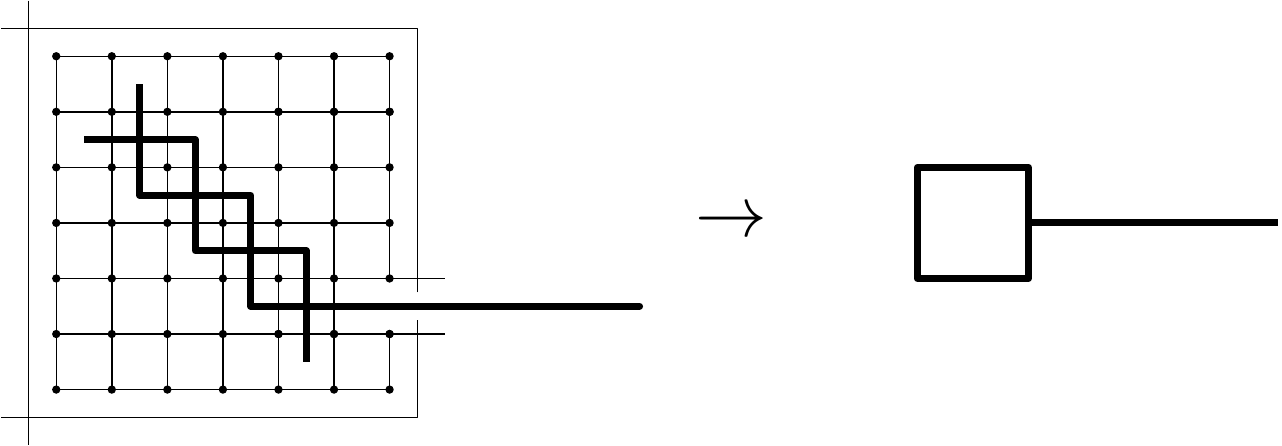}
\caption{Construction of a pin}\label{fig:sausagepin}
\end{figure}

\begin{figure}
\centering
\includegraphics[scale=0.7]{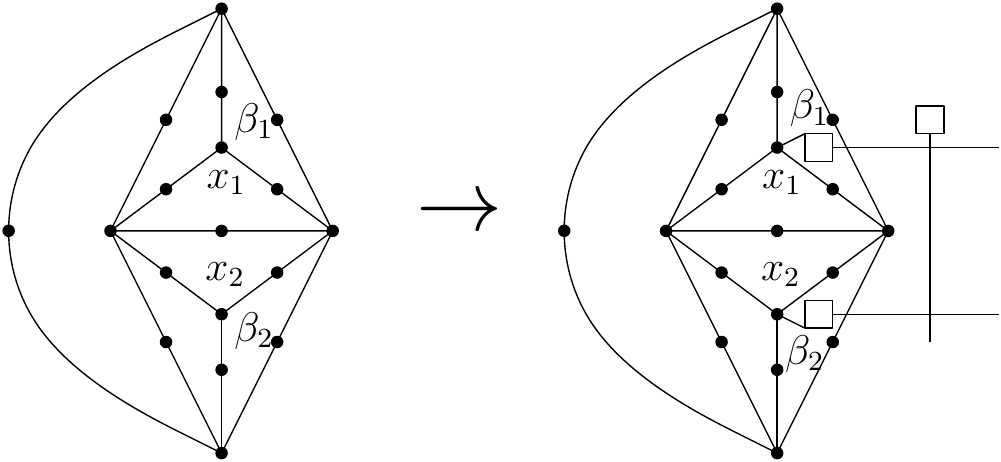}
\caption{Construction of a clothespin}\label{fig:doublepin}
\end{figure}

Next we combine two pins together to form a {\em clothespin}. The  construction
is illustrated in the schematic Fig.~\ref{fig:doublepin}. We start with a $K_4$
whose edges are subdivided by one vertex each. Every STRING representation of
this graph contains 4 basic regions which correspond to the faces of a drawing
of the $K_4$ (this is true for every 3-connected planar graph and it is seen by
contracting the curves corresponding to the degree 2 vertices, the argument
going back to Sinden~\cite{sinden}). We add two vertices $x_1,x_2$ that are
connected by paths of length 2 to the boundary vertices of two triangles, say
$\beta_1$ and $\beta_2$. The curves representing $x_1$ and $x_2$ must lie
entirely inside the corresponding regions. Then we add two pins, say $P(u_1)$
and $P(u_2)$, connect the vertices of the boundary of $\alpha_i$ to $x_i$ by
paths of length 2  and make $u_i$ adjacent to a vertex on the boundary of
$\beta_i$ (for $i=1,2$). Finally, we add a third pin $P(u_3)$ and make $u_3$
adjacent to $u_1$ and $u_2$. We denote the resulting graph by $CP(u)$.

\newcommand{\vR}{{\check R}}

It is easy to check that the clothespin has a $B_k$-VPG representation $\vR$
such that the tips of $\vR(u_1)$ and $\vR(u_2)$ are parallel and extend
arbitrarily far from the rest of the representation, as indicated in
Fig.~\ref{fig:doublepin}.

On the other hand, in any $B_k$-VPG representation $R'$ of $CP(u)$, if a curve
crosses $R'(u_1)$ and $R'(u_2)$ and no other path of $R'(CP(u))$, then it must
cross the tips of $R'(u_1)$ and $R'(u_2)$. This follows from the fact that for
$i=1,2$, $R'(x_i)$ must lie in $\alpha_i$ (to be able to reach all its bounding
curves), and hence, by circle inversion, all bends of $R'(u_i)$ are trapped
inside $\beta_i$. If a curve crosses both $R'(u_1)$ and $R'(u_2)$, it must cross
them outside $\beta_1\cup \beta_2$, and hence it only may cross their tips.

Now we are ready to describe the construction of $G'_{\Phi}$. We take
$G_{\Phi}$ as constructed in~\cite{plsat} replace every vertex $u$ by a
clothespin $CP(u)$, and whenever $uv\in E(G_{\Phi})$, we add edges $u_iv_j,
i,j=1,2$. Now we claim that $G'_{\Phi}\in B_k$-VPG if and only if $\Phi$ is
satisfiable, while $G'_{\Phi}\in B_{k+1}$-VPG is always true.

On one hand, if $G'_{\Phi}\in B_k$-VPG and $R'$ is a $B_k$-VPG representation
of $G'_{\Phi}$, then the tips of $R'(u_1), u\in V(G_{\Phi})$ form a 2-DIR
representation of $G_{\Phi}$ ($R'(u_1)$ and $R'(v_1)$ may only intersect in
their tips) and $\Phi$ is satisfiable.

On the other hand, if $\Phi$ is satisfiable, we represent $G_{\Phi}$ as a grid
intersection graph and replace every segment of the representation by a
clothespin with slim parallel tips and the body of the pin tiny enough so that
does not intersect anything else in the representation. Similarly, if $\Phi$ is
not satisfiable, we modify a 1-bend representation of $G_{\Phi}$ by replacing
the paths of the representation by clothespins with 1-bend on the tips, thus
obtaining a $B_{k+1}$-VPG representation of $G_{\Phi}$. The representation
consists of a large part inherited from the representation of $G_{\Phi}$ and
tiny parts representing the heads of the pins, but these can be made all of the
same constant size and thus providing only a constant ratio refinement of the
representation of~$G_{\Phi}$. The representation is thus still of linear size
and can be constructed in polynomial time.
\qed\end{proof}

\section{Concluding Remarks}

In this paper we have affirmatively settled two main conjectures of Asinowski
et al \cite{Asinowski2012} regarding VPG graphs. We have also demonstrated the
relationship between $B_k$-VPG graphs and segment graphs.

The first conjecture that we settled claimed that $B_k$-VPG is a strict subset
of $B_{k+1}$-VPG for all $k$. We have proven this constructively. Previously
only the following separation was known: $B_0$-VPG $\subsetneq$ $B_1$-VPG
$\subsetneq$ VPG.

The second conjecture claimed that the $B_k$-VPG recognition problem is
NP-Complete for all $k$. We have actually proven a stronger result; namely, that
the $B_k$-VPG recognition problem is NP-Complete for all $k$ even when the input
graph is a $B_{k+1}$-VPG graph. Previously only the NP-Completeness of $B_0$-VPG
(from 2-DIR \cite{plsat}) and VPG (from STRING \cite{honza,schaeffer}) were
known.

Finally due to the close relationship between VPG graphs and segment graphs
(i.e., since $B_0$-VPG = 2-DIR, and SEG $\subsetneq$ STRING = VPG) we have
considered the relationship between these classes. In particular, we have shown
that:
\begin{itemize}
\item There is no $k$ such that 3-DIR is contained in $B_k$-VPG (i.e., SEG
is not contained in $B_k$-VPG for any $k$).
\item $B_1$-VPG is not contained in SEG.
\end{itemize}

Thus, to obtain polynomial time recognition algorithms, one would need
to restrict attention to specific cases with (potentially) useful structure. In
this respect, in \cite{GolRies}, certain subclasses of $B_0$-VPG graphs have
been characterized and shown to admit polynomial time recognition; namely split,
chordal claw-free, and chordal bull-free $B_0$-VPG graphs are discussed in
\cite{GolRies}. Additionally, in \cite{b0-subclasses}, $B_0$-VPG chordal and
2-row $B_0$-VPG\footnote{Where the VPG representation has at most two rows.}
have been shown to have polynomial time recognition algorithms.  In particular,
we conjecture that applying similar restrictions to the $B_k$-VPG graph class
will also yield polynomial time recognition algorithms. It is interesting to
note that since our separating examples are not chordal it is also open whether
$B_k$-VPG chordal $\subsetneq$ $B_{k+1}$-VPG chordal.

% \begin{figure}
% \includegraphics[scale=0.8]{img/3dir2.pdf}
% \end{figure}

% \begin{figure}
% \includegraphics[scale=0.8]{img/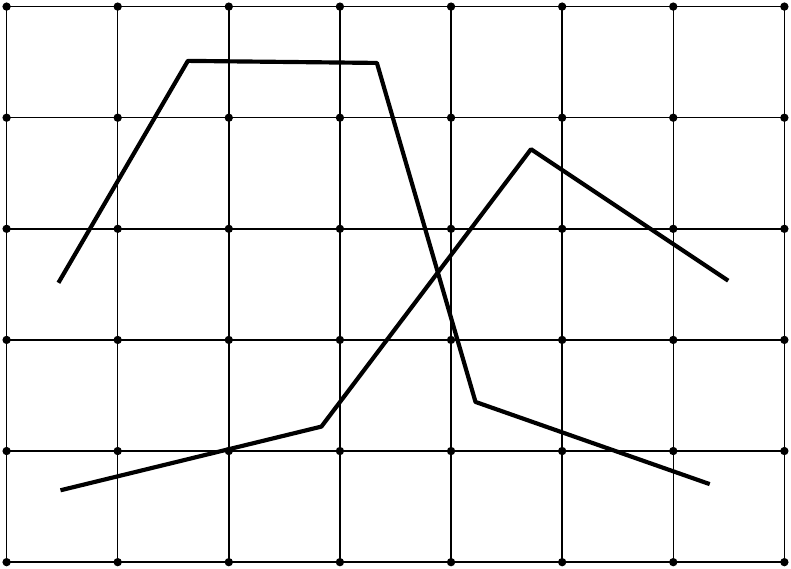}
% \end{figure}

%\begin{figure}
%\centering
%\includegraphics[scale=0.8]{img/sausagepin.pdf}
%\caption{Construction of a pin}\label{fig:sausagepin}
%\end{figure}
%
%\begin{figure}
%\centering
%\includegraphics[scale=0.8]{img/doublepin.pdf}
%\caption{Construction of a clothespin}\label{fig:doublepin}
%\end{figure}

% \begin{figure}\label{fig:sausage}
% \includegraphics[scale=0.8]{img/sausage.pdf}
% \end{figure}
% \begin{figure}
% \includegraphics[scale=0.8]{img/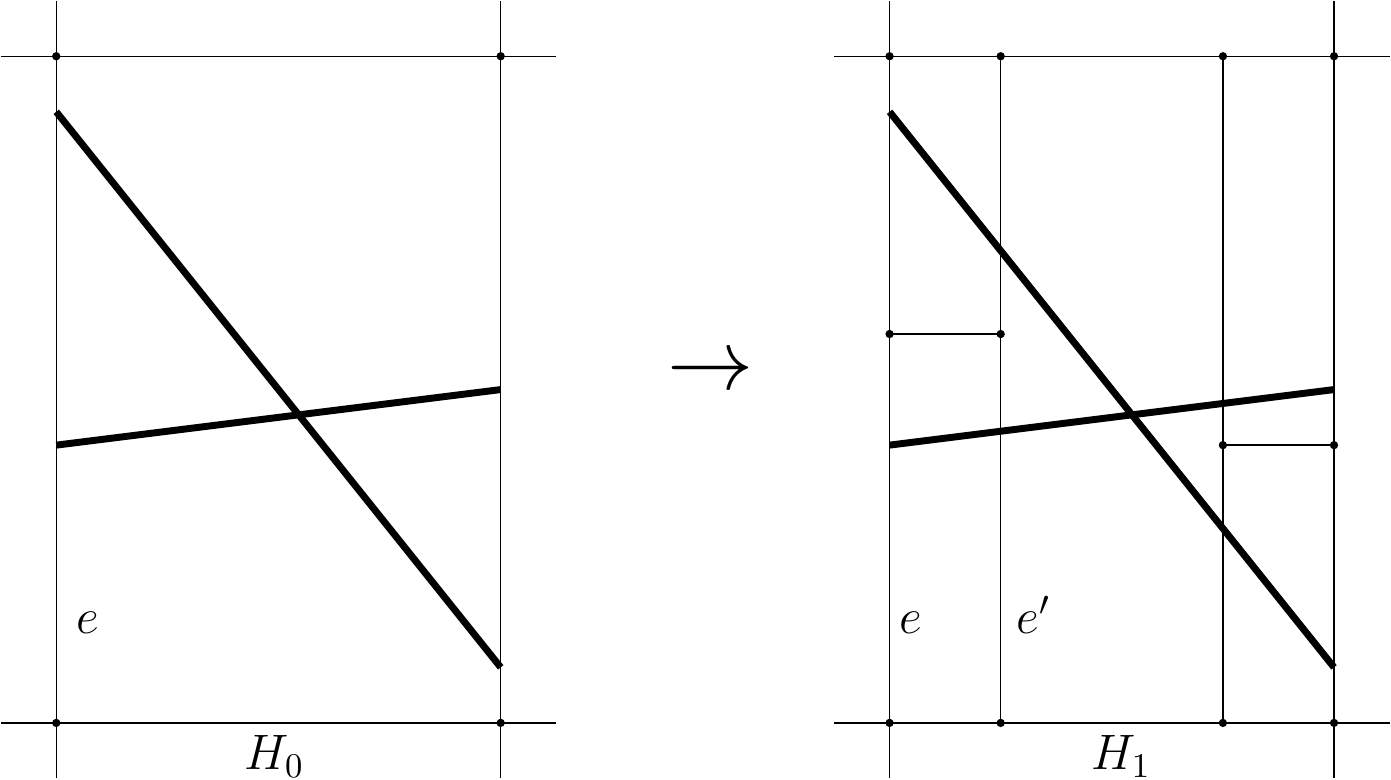}
% \end{figure}
% \begin{figure}
% \includegraphics[scale=0.8]{img/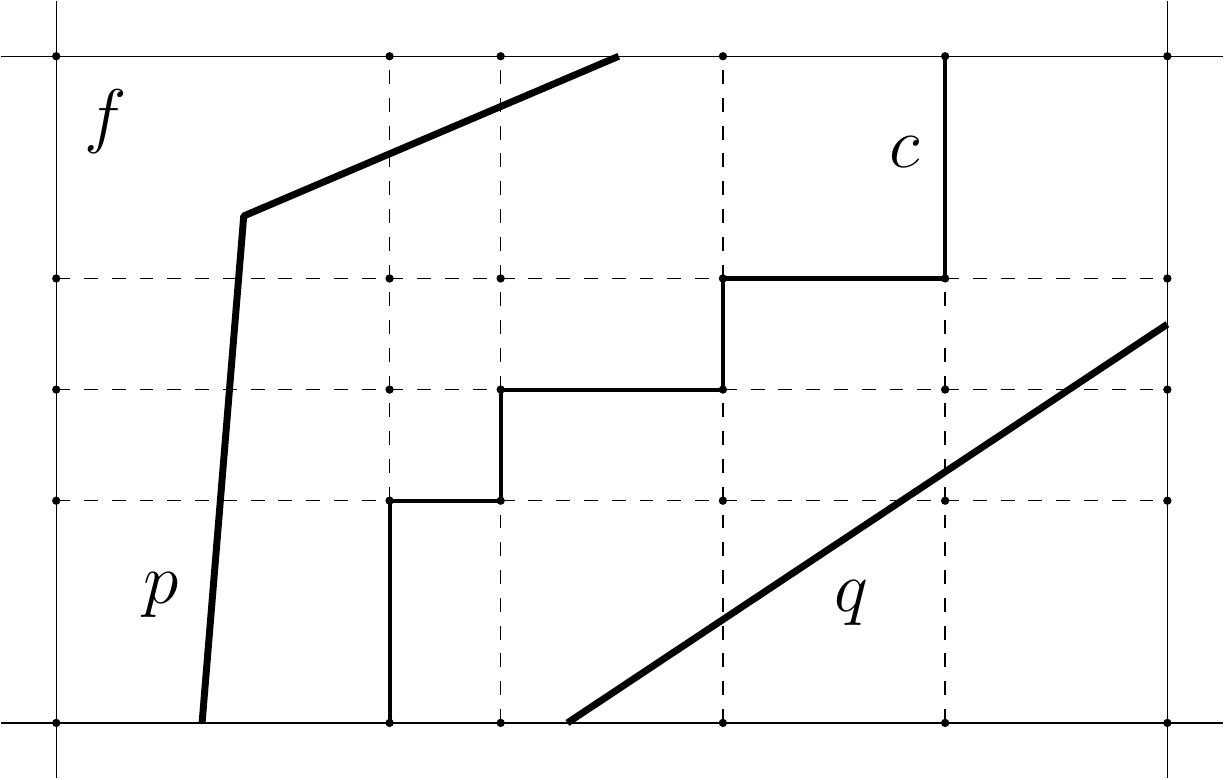}
% \end{figure}
% \begin{figure}
% \includegraphics[scale=0.8]{img/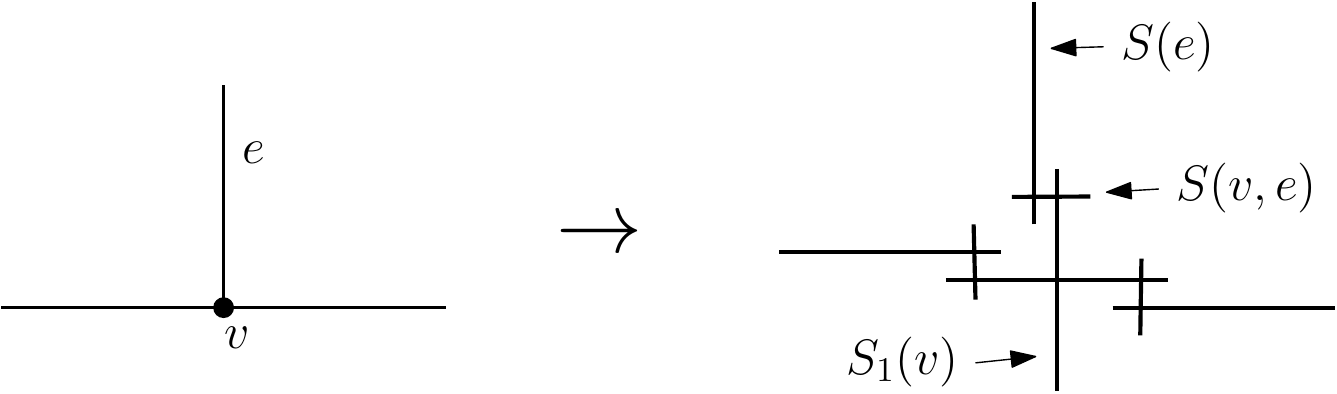}
% \end{figure}
% \begin{figure}
% \includegraphics[scale=0.8]{img/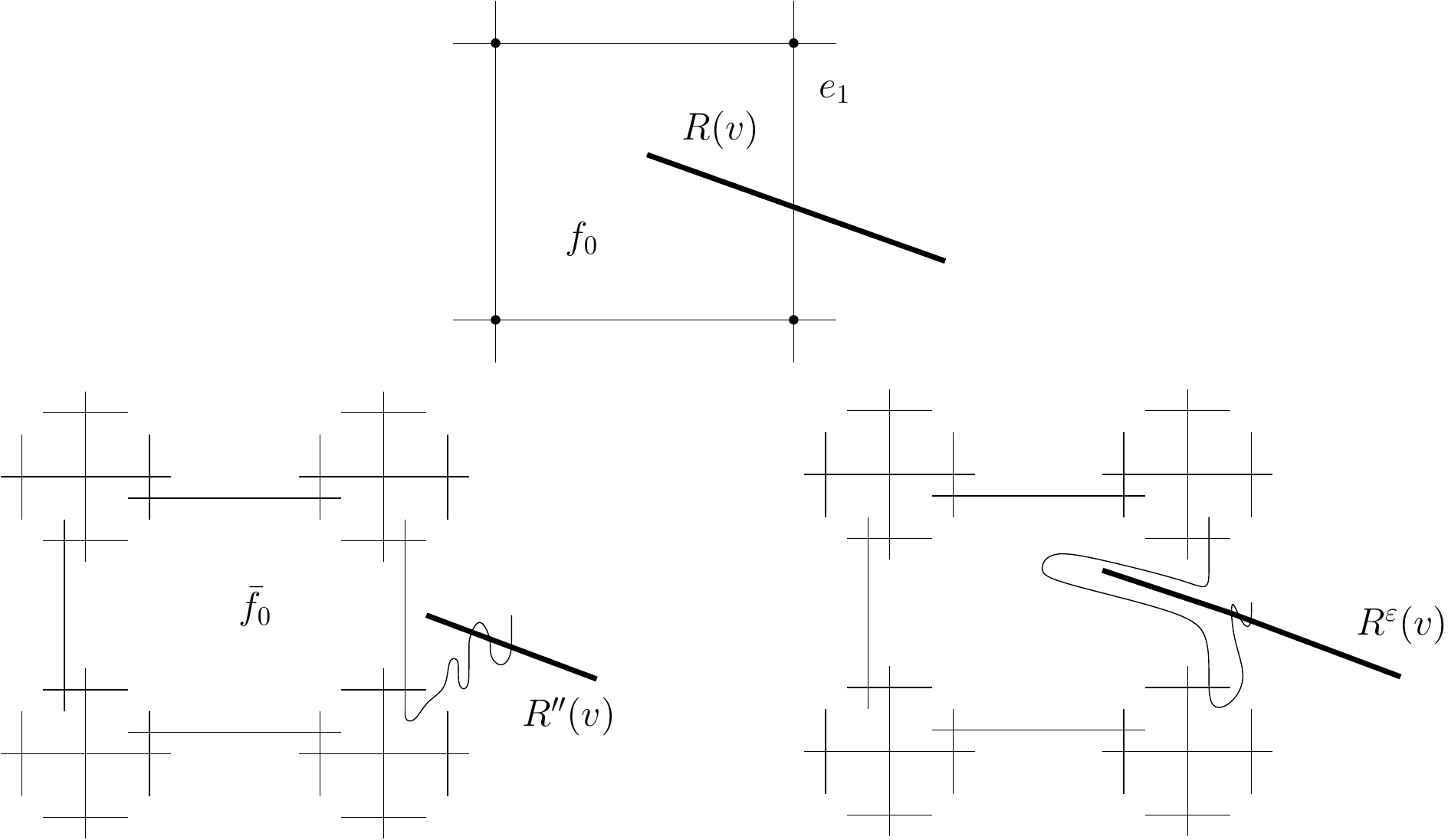}
% \end{figure}

\end{document}